\spnewtheorem*{corollary*}{Corollary}{\bf}{\itshape}
\title {Online Detection of Repetitions with~Backtracking}
\author
{
	Dmitry Kosolobov
}
\institute{Ural Federal University, Ekaterinburg, Russia\\ \email{dkosolobov@mail.ru}}
\begin{document}

\maketitle

\begin{abstract}
In this paper we present two algorithms for the following problem: given a string and a rational $e > 1$, detect in the online fashion the earliest occurrence of a repetition of exponent $\ge e$ in the string.

1. The first algorithm supports the backtrack operation removing the last letter of the input string. This solution runs in $O(n\log m)$ time and $O(m)$ space, where $m$ is the maximal length of a string generated during the execution of a given sequence of $n$ read and backtrack operations.

2. The second algorithm works in $O(n\log\sigma)$ time and $O(n)$ space, where $n$ is the length of the input string and $\sigma$ is the number of distinct letters. This algorithm is relatively simple and requires much less memory than the previously known solution with the same working time and space.
\end{abstract}
\keywords{repetition-free  square-free, online algorithm, backtracking}

\section{Introduction}

The study of algorithms analyzing different kinds of string periodicities forms an important branch of stringology. Repetitions of a given fixed order often play a central role in such investigations. We say that an integer $p$ is a \emph{period} of $w$ if $w = (uv)^ku$ for some integer $k \ge 1$ and strings $u$ and $v$ such that $|uv| = p$. Given a rational $e > 1$, a string $w$ such that $|w| \ge pe$ for a period $p$ of $w$ is called an \emph{$e$-repetition}. A string is \emph{$e$-repetition-free} if it does not contain an $e$-repetition as a substring. We consider algorithms recognizing $e$-repetition-free strings for any fixed $e > 1$. To be more precise, we say that an algorithm \emph{detects $e$-repetitions} if it decides whether the input string is $e$-repetition-free. Further, we say that this algorithm detects $e$-repetitions \emph{online} if it processes the input string sequentially from left to right and decides whether each prefix is $e$-repetition-free after reading the rightmost letter of that prefix.

In this paper we give two algorithms that detect $e$-repetitions online for a given fixed $e > 1$.
The first one, which uses the ideas of the Apostolico-Breslauer algorithm~\cite{ApostolicoBreslauer}, works on unordered alphabet and supports \emph{backtracking}, the operation removing the last letter of the processed string. This solution requires $O(n\log m)$ time and $O(m)$ space, where $m$ is the maximal length of a string generated during the execution of $n$ given backtrack and read operations. Slightly modifying the proof from~\cite{MainLorentz}, one can show that this time is the best possible in the case of unordered alphabet. The second algorithm works on ordered alphabet and requires $O(n\log\sigma)$ time and linear space, where $\sigma$ is the number of distinct letters in the input string and $n$ is the length of this string. Although this result does not theoretically outperform the previously known solution~\cite{HongChen}, it is significantly less complicated and can be used in practice. Both algorithms report the position of the leftmost \mbox{$e$-repetition}.

Let us point out some previous results on the problem. Recall that a repetition of the form $xx$ is called a \emph{square}. A string is \emph{square-free} if it is $2$-repetition-free. Squares are, perhaps, the most extensively studied repetitions. The classical result of Thue \cite{Thue} states that on a three-letter alphabet there are infinitely many square-free strings. How fast can one decide whether a string is square-free? It turns out that the orderedness of alphabet plays a crucial role here: while any algorithm detecting squares on unordered alphabet requires $\Omega(n\log n)$ time~\cite{MainLorentz}, it is unlikely that \emph{any} superlinear lower bound exists in the case of ordered alphabet, in view of the recent result of the author~\cite{Kosolobov}. So, we always emphasize whether an algorithm under discussion relies on order or not.

The best known offline (not online) results are the algorithm of Main and Lorentz~\cite{MainLorentz} detecting $e$-repetitions in $O(n\log n)$ time and linear space on unordered alphabet, and Crochemore's algorithm~\cite{Crochemore} detecting $e$-repetitions in $O(n\log \sigma)$ time and linear space on ordered alphabets. Our interest in online algorithms detecting repetitions was partially motivated by problems in the artificial intelligence research (see~\cite{LeungPengTing}), where some algorithms use the online square detection. Apostolico and Breslauer \cite{ApostolicoBreslauer} presented a parallel algorithm for this problem on an unordered alphabet. As a by-product, they obtained an online algorithm detecting squares in $O(n\log n)$ time and linear space, the best possible bounds as it was noted above. Later, online algorithms detecting squares in $O(n\log^2 n)$ \cite{LeungPengTing} and $O(n(\log n{+}\sigma))$ \cite{JanssonPeng} time were proposed. Apparently, their authors were unaware of the result of \cite{ApostolicoBreslauer}. For ordered alphabet, Jansson and Peng \cite{JanssonPeng} found an online algorithm detecting squares in $O(n\log n)$ time and Hong and Chen \cite{HongChen} presented an online algorithm detecting $e$-repetitions in $O(n\log\sigma)$ time and linear space.

An online algorithm for square detection with backtracking is in the core of the generator of random square-free strings described in~\cite{Shur}. Using our algorithm with backtracking, one can in a similar way construct a generator of random \mbox{$e$-repetition-free} strings for any fixed $e > 1$. This result might be useful in further studies in combinatorics on words.

The paper is organized as follows. In Section~\ref{SectCatcher} we present some basic definitions and the key data structure, called catcher, which helps to detect repetitions. Section~\ref{SectUnord} contains an algorithm with backtracking. In Section~\ref{SectOrd} we describe a simpler solution without backtracking.

\section{Catcher}\label{SectCatcher}

A \emph{string of length $n$} over the alphabet $\Sigma$ is a map $\{1,2,\ldots,n\} \mapsto \Sigma$, where $n$ is referred to as the length of $w$, denoted by $|w|$. We write $w[i]$ for the $i$th letter of $w$ and $w[i..j]$ for $w[i]w[i{+}1]\ldots w[j]$. Let $w[i..j]$ be the empty string for any~$i > j$. A string $u$ is a \emph{substring} of $w$ if $u = w[i..j]$ for some $i$ and $j$. The pair $(i,j)$ is not necessarily unique; we say that $i$ specifies an \emph{occurrence} of $u$ in $w$. A string can have many occurrences in another string. A substring $w[1..j]$ [resp., $w[i..n]$] is a \emph{prefix} [resp. \emph{suffix}] of $w$. For any $i,j$, the set $\{k\in \mathbb{Z} \colon i \le k \le j\}$ (possibly empty) is denoted by $[i..j]$; $(i..j]$ and $[i..j)$ denote $[i..j] \setminus \{i\}$ and $[i..j] \setminus \{j\}$ respectively.

We fix a rational constant $e > 1$ and use it throughout the paper. The input string is denoted by $text$ and $n = |text|$. Initially, $text$ is the empty string. We refer to the operation appending a letter to the right of $text$ as \emph{read operation} and to the operation that cuts off the last letter of $text$ as \emph{backtrack operation}.

Let us briefly outline the ideas behind our results. Both our algorithms utilize an auxiliary data structure based on a scheme proposed by Apostolico and Breslauer~\cite{ApostolicoBreslauer}. This data structure is called a \emph{catcher}. Once a letter is appended to the end of $text$, the catcher checks whether $text$ has a suffix that is an $e$-repetition of length $k$ such that $k \in [l..r]$ for some segment $[l..r]$ specific for this catcher. The segment $[l..r]$ cannot be arbitrary, so we cannot, for example, create a catcher with $l = 1$ and $r = n$. But, as it is shown in Section~\ref{SectUnord}, we can maintain $O(\log n)$ catchers such that the union of their segments $[l..r]$ covers the whole range from $1$ to $n$ and hence these catchers ``catch'' each $e$-repetition in $text$. This construction leads to an algorithm with backtracking. In Section~\ref{SectOrd} we further reduce the number of catchers to a constant but this solution does not support backtracking.

In what follows we first describe an inefficient version of the read operation for catcher and show how to implement the backtrack operation; then, we improve the read operation and provide time and space bounds for the constructed catcher.

Let $i$ and $j$ be integers such that $1 \le i \le j < n$. Observe that if for some $k \le i$, the string $text[k..n]$ is an $e$-repetition and $e(n - j) \ge n - k + 1$, then the string $text[i..j]$ occurs in $text[i{+}1..n]$ (see Fig.~\ref{fig:catcher}). Given this fact, the read operation works as follows. The catcher searches online occurrences of the string $text[i..j]$ in $text[i{+}1..n]$. If we have $text[i..j] = text[n{-}(j{-}i)..n]$, then the number $p = n - j$ is a period of $text[i..n]$. The catcher ``extends'' the repetition $text[i..n]$ to the left with the same period $p$. Then, the catcher online ``extends'' the repetition to the right with the same period $p$ until an $e$-repetition is found. We say that \emph{the catcher is defined by $i$ and $j$}.
\begin{figure}[htb]
\vskip-4mm
\includegraphics[scale=0.55]{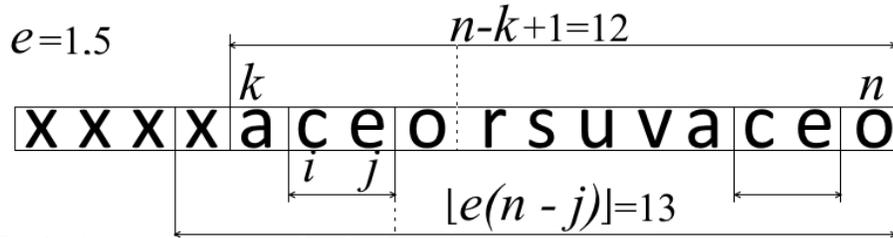}
\vskip-4mm
\caption{\small An $e$-repetition $text[k..n]$, where $k = 5$, $n = 16$. Here $i = 6$, $j = 7$, and $text[i..j] = text[14..15]$.}
\label{fig:catcher}
\vskip-5mm
\end{figure}

\begin{example}
Consider $text = xxxxaceorsuv$. Denote $n = |text|$. Suppose $e = 1.5$. Let a catcher be defined by $i = 6$ and $j = 7$ (see Fig.~\ref{fig:catcher}). We consecutively perform the read operations that append the letters $a, c, e, o$ to the right of $text$. The catcher online searches occurrences of the string $text[i..j] = ce$ (e.g., using the standard Boyer-Moore or Knuth-Morris-Pratt algorithm). Once we have $text = xxxxaceorsuvace$, the catcher has found an occurrence of $text[i..j]$: $text[n{-}1..n] = ce$. Hence, the string $text[i..n] = ceorsuvace$ has a period $p = n - j = 8$. The catcher ``extends'' this repetition to the left and thus obtains the repetition $text[i{-}1..n] = aceorsuvace$ with period $p$. Then the catcher online ``extends'' the found repetition to the right: after the next read operation, the catcher obtains the repetition $text[i{-}1..n] = aceorsuvaceo$ that is an \mbox{$e$-repetition}.
\end{example}

To support the backtrack operation, we store the states of the catcher in an array of states and when the backtracking is performed, we restore the previous state. For the described read operation, this approach has two drawbacks. First, the state does not necessarily require a fixed space, so the array of states may take a large amount of memory. Second, the catcher can spend a lot of time at some text locations (alternating backtracking with reading) and therefore the complexity of the whole algorithm can greatly increase. To solve these problems, our improved read operation performs the ``extensions'' of found repetitions and the searching of $text[i..j]$ simultaneously.

This approach relies on a \emph{real-time constant-space string matching algorithm}, i.e., a constant-space algorithm that processes the input string online, spending constant time per letter; once the searched pattern occurs, the algorithm reports this occurrence. For unordered alphabet, we can use the algorithm of Galil and Seiferas~\cite{GalilSeiferas} though in the case of ordered alphabet, it is more practical to use the algorithm of Breslauer, Grossi, and Mignosi~\cite{BreslauerGrossiMignosi}.

The improved read operation works as follows. Denote $h = (j - i + 1)/2$. The real-time string matching algorithm searches for $text[i..i{+}\lceil h\rceil{-}1]$. It is easy to see that if we have $text[n{-}\lceil h\rceil{+}1..n] = text[i..i{+}\lceil h\rceil{-}1]$, then the number $p = (n - \lceil h\rceil + 1) - i$ is a period of $text[i..n]$. The catcher maintains a linked list $P$ of pairs $(p, l_p)$, where $p$ is found in the described way and $l_p$ is such that $p$ is a period of $text[l_p{+}1..n]$ (initially $l_p = i - 1$). Each read operation tries to extend $text[l_p{+}1..n]$ with the same period $p$ to the right and to the left. If $text[n] \ne text[n{-}p]$, then the catcher removes $(p,l_p)$ from $P$. To extend to the left, we could assign $l_p \gets \min\{l \colon text[l{+}1..n]\text{ has a period }p\}$ but the calculation of this value requires $O(n)$ time while we want to keep within the constant time on each read operation.

In order to achieve this goal, we will extend $r$ symbols to the left after reading a letter. We choose $r = \lceil(e - 1)p / \lfloor h\rfloor\rceil$. Then one of two situations occurs at the moment when $text[i..j] = text[i{+}p..n]$ (i.e.,  an occurrence of $text[i..j]$ is found). Either we have $text[l_p] \ne text[l_p{+}p]$ ($l_p$ cannot be ``extended'' to the left) or $text[l_p{+}1..n]$ is an $e$-repetition. Suppose $text[i..j] = text[i{+}p..n]$ and $text[l_p] = text[l_p{+}p]$. Since at this moment we have performed $\lfloor h\rfloor$ operations decreasing $l_p$ by $r$, we have $l_p = i - 1 - \lfloor h\rfloor r$ and hence $n - l_p \ge p + \lfloor h\rfloor r$. Thus, if we put $r = \lceil(e - 1)p / \lfloor h\rfloor\rceil$, then $n - l_p \ge ep$ and therefore, $text[l_p{+}1..n]$ is an $e$-repetition. The following pseudocode clarifies this description.
\begin{algorithmic}[1]
\State read a letter and append it to $text$ (thereby incrementing $n$)
\State feed the letter to the algorithm searching for $text[i..i{+}\lceil h\rceil{-}1]$
\If{$text[n{-}\lceil h\rceil{+}1..n] = text[i..i{+}\lceil h\rceil{-}1]$} \Comment{found an occurrence}
    \State $p \gets (n - \lceil h\rceil + 1) - i;\;l_p \gets i - 1;$ \Comment{$p$ is a period of $text[l_p{+}1..n]$}
    \State $P \gets P \cup \{(p,l_p)\};$
\EndIf
\ForAll{$(p,l_p) \mathrel{\mathbf{in}} P$}
    \If{$text[n] \ne text[n{-}p]$}
        \State $P \gets P \setminus \{(p,l_p)\};$ \Comment{$text[l_p{+}1..n]$ cannot be ``extended'' to the right}
    \Else
        \State $r \gets \lceil(e - 1)p /\lfloor h\rfloor\rceil;$\label{lst:rvalue} \Comment{maximal number of left ``extensions''}
        \While{$l_p > 0 \mathrel{\mathbf{and}} r > 0 \mathrel{\mathbf{and}} text[l_p] = text[l_p{+}p]$} \label{lst:inloop}
            \State $l_p \gets l_p - 1;\;r \gets r - 1;$ \Comment{``extend'' $text[l_p{+}1..n]$ to the left}\label{lst:inloop2}
        \EndWhile
        \If{$n - l_p \ge ep$} \Comment{if $text[l_p{+}1..n]$ is an $e$-repetition}
            \State detected $e$-repetition $text[l_p{+}1..n]$
        \EndIf
    \EndIf
\EndFor
\end{algorithmic}
A state of the catcher consists of the list $P$ and the state of the string matching algorithm, $O(|P| + 1)$ integers in total. To support the backtracking, we simply store the states of the catcher in an array of states.
\begin{lemma}
Suppose that $i$ and $j$ define a catcher on $text$, $n$ is the current length of $text$, and $c>0$. If the conditions (i) $text[1..n-1]$ is $e$-repetition-free and (ii) $c(j - i + 1)\ge n - i$ hold, then each read or backtrack operation takes $O(c + 1)$ time and the catcher occupies $O((c + 1)(n - i))$ space.\label{CatcherTime}
\end{lemma}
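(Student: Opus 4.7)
My plan is to separately bound the individual periods held in $P$, the cardinality $|P|$, the per-operation running time, and the space occupied by the stored states.

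First, I will use (i) to pin down the range of periods that can appear in $P$. For any $(p,l_p)\in P$, the number $p$ is a period of $text[l_p{+}1..n]$ and hence also of the substring $text[i..n{-}1]$, which has length $n-i$ and, by (i), is $e$-repetition-free; therefore $p>(n-i)/e$, as otherwise $text[i..n{-}1]$ would itself be an $e$-repetition. For two periods $p_a<p_b$ in $P$, combining their period equations shows that $text[i{+}p_a..n]$ has period $p_b-p_a$; restricting to $text[i{+}p_a..n{-}1]$ and applying (i) yields $(n-i)-p_a<e(p_b-p_a)$, and iterating this over consecutive periods produces a geometric shrinking $q_{j+1}<q_j(e-1)/e$ of the quantities $q_j:=(n-i)-p_j$. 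In parallel, consecutive occurrences in $text$ of the pattern $text[i..i{+}\lceil h\rceil{-}1]$ are separated by at least the smallest period of the pattern, which by (i) exceeds $\lceil h\rceil/e$; together with (ii) this caps the number of occurrences, and hence $|P|$, by $O(c)$. I will combine these two bounds to control $|P|$.

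Next, I will read off the per-read time from the pseudocode. Each element $(p,l_p)\in P$ contributes an $O(1)$ right-side check plus at most $r=\lceil(e-1)p/\lfloor h\rfloor\rceil$ iterations of the inner \textbf{while} loop; using $p\le n-i$, (ii) and $\lfloor h\rfloor=\Theta(j-i+1)$ gives $r=O(c+1)$. Adding the constant-time string-matching step and the at-most-one insertion, and summing over $P$, yields the claimed $O(c+1)$ per read. Backtracking simply restores the previous catcher state, whose size is $O(|P|+1)=O(c+1)$. For the space, each stored state occupies $O(|P|+1)$ words and at most $O(n-i)$ states are stored (one per read so far), giving total space $O((|P|+1)(n-i))=O((c+1)(n-i))$.

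The hard part will be reconciling the bounds on $|P|$ and on the per-element cost $r=O(c+1)$ so that their product is $O(c+1)$ and not $O((c+1)^2)$ or $O((c+1)\log(n-i))$: the Fine--Wilf shrinking argument alone gives only $|P|=O(\log(n-i))$, and the spacing argument only $|P|=O(c)$. My plan to close this gap is to amortize the inner \textbf{while}-loop work across successive reads using the fact that each $l_p$ is monotonically non-increasing from $i-1$ down to at least $0$: the \emph{total} while-loop work charged to a single element over its entire lifetime is then $O(i)$, and spreading this across the reads during which the element stays active gives an amortized cost per read that sums across $P$ to $O(c+1)$.
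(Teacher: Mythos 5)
Your space bound, your bound $|P|=O(c)$ via the spacing of occurrences of $text[i..i{+}\lceil h\rceil{-}1]$, and your per-element bound $r=O(c+1)$ all match the paper's proof. The genuine gap is exactly at the point you flag as ``the hard part'': the amortization you propose does not close it. First, the lemma asserts a \emph{worst-case} $O(c+1)$ bound for each individual read or backtrack operation (and Theorem~1 relies on this, summing $O(1)$ worst-case over $O(\log m)$ catchers per operation); an amortized bound is weaker, and under backtracking it collapses entirely, since an adversary can alternate read and backtrack so as to replay the single most expensive read arbitrarily often while the stored states are restored each time. Second, even as a purely amortized claim the accounting fails: the total while-loop work of one element over its lifetime is bounded by $i-1$ (the range of $l_p$), a quantity that can vastly exceed $(c+1)(n-i)$ and bears no relation to the element's lifetime; an element whose occurrence dies after a few reads still pays up to $r=\Theta(c)$ on each of those reads, and $\Theta(c)$ such elements alive simultaneously would give $\Theta(c^2)$ for a single read, amortized or not.

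The missing idea is that the choice $r=\lceil(e-1)p/\lfloor h\rfloor\rceil$ forces a dichotomy: within $\lfloor h\rfloor$ reads of the creation of $(p,l_p)$ --- i.e., by the time $text[i{+}p..n]$ ceases to be a proper prefix of $text[i..j]$ --- either $l_p$ has been decremented $\lfloor h\rfloor r\ge(e-1)p$ times and an $e$-repetition is reported, or the loop has stalled because $text[l_p]\ne text[l_p{+}p]$, after which the inner loop never executes for that element again. Hence at any single read the loop can fire only for elements with $p>n-j$; the corresponding occurrences of $v=text[i..i{+}\lceil h\rceil{-}1]$ lie in a window of length about $j-i+1$ and are separated by more than $\frac{1}{e}|v|\ge\frac{1}{2e}(j-i+1)$, so at most $2e=O(1)$ elements execute the loop at all during any one read. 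The per-read cost is therefore $O(|P|)+2e\cdot O(c)=O(c+1)$ in the worst case. This is the step you would need to supply; your observation $p>(n-i)/e$ and the Fine--Wilf shrinking argument are not needed.
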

\begin{proof}
Clearly, at any time of the work, the array of states contains $n - i$ states. Each state occupies $O(|P| + 1)$ integers. Hence, to estimate the required space, it suffices to show that $|P| = O(c)$. Denote $v = text[i..i{+}\lceil h\rceil{-}1]$. It follows from the pseudocode that each $(p,l_p) \in P$ corresponds to a unique occurrence of $v$ in $text[i{+}1..n]$. Thus, to prove that $|P| = O(c)$, it suffices to show that the string $v$ has at most $O(c)$ occurrences in $text[i{+}1..n]$ at any time of the work of the catcher. Suppose $v$ occurs at positions $k_1$ and $k_2$ such that $i < k_1 < k_2 < k_1 + |v|$. Hence, the number $k_2 - k_1$ is a period of $v$. Since $text[1..n{-}1]$ is $e$-repetition-free during the work of the catcher, we have $k_2 - k_1 > \frac{1}{e}|v|$. Therefore the string $v$ always has at most $(n - i) / (\frac{1}{e}|v|)$ occurrences in the string $text[i{+}1..n]$. Finally, the inequalities $|v| \ge \frac{1}{2}(j - i + 1)$ and $\frac{n - i}{j - i + 1} \le c$ imply $(n - i) / (\frac{1}{e}|v|) \le 2ec = O(c)$.

Obviously, each backtrack operation takes $O(c)$ time. Any read operation takes at least constant time for each $(p,l_p) \in P$. But for some $(p,l_p) \in P$, the algorithm can perform $O((e - 1)p / h) = O(p/h)$ iterations of the loop in lines~\ref{lst:inloop}--\ref{lst:inloop2} (see the value of $r$ in line~\ref{lst:rvalue}). Since $p \le n - i$ for each $(p,l_p) \in P$, we have $p/h \le 2(n - i) / (j - i + 1) \le 2c$ and therefore, the loop performs at most $O(c)$ iterations. The loop is executed iff $text[l_p] = text[l_p{+}p]$. But since for each $(p,l_p) \in P$, the value of $r$ is chosen in such a way that $text[l_p] = text[l_p{+}p]$ only if $text[i{+}p..n]$ is a proper prefix of $text[i..j]$ (see the discussion above), there are at most $(j - i + 1)/(\frac{1}{e}|v|) \le 2e$ periods $p$ for which the algorithm executes the loop. Finally, we have $O(|P| + 2ec) = O(c)$ time for each read operation.
\qed
\end{proof}

\begin{lemma}
If for some $k$, the string $text[n{-}k{+}1..n]$ is an $e$-repetition and ${n - i < k \le e(n - j)}$, then a catcher defined by $i$ and $j$ detects this repetition.\label{CatcherTrap}
\end{lemma}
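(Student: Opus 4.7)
The plan is to exhibit a specific period $p$ of the hypothesized $e$-repetition, pinpoint the moment the catcher discovers it as a ``period candidate'' and adds it to $P$, and then verify that the inner extension loop reaches the detection condition $n' - l_p \ge ep$ before the current length exceeds $n$.

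First I would set $m = n - k + 1$. Hypothesis (i) $n - i < k$ gives $m \le i$, and letting $p$ be a period of the $e$-repetition $text[m..n]$ (so $k \ge ep$) combined with (ii) $k \le e(n-j)$ gives $p \le k/e \le n-j$. Since $m \le i$, the string $text[i..n]$ also has period $p$, and therefore $v := text[i..i+\lceil h\rceil - 1]$ reoccurs at position $i+p$; the right end of that occurrence is $n_0 := i + p + \lceil h\rceil - 1$, which satisfies $n_0 \le n$ because $p \le n-j$ and $\lceil h \rceil \le j - i + 1$.

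Next, I would argue that at the step when the current length first equals $n_0$, the test $text[n_0 - \lceil h\rceil + 1..n_0] = v$ fires and the pair $(p, i-1)$ is inserted into $P$. The key invariant is that this pair persists through all subsequent reads up to length $n$: for every $n' \in [n_0 + 1..n]$ we have $n' - p \ge m$, so the periodicity of $text[m..n]$ forces $text[n'] = text[n' - p]$, preventing the removal branch.

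With the pair retained, I would analyze how $l_p$ evolves under the inner loop with $r = \lceil (e-1)p / \lfloor h\rfloor\rceil$. Three cases arise. (a) The loop is never interrupted for $\lfloor h\rfloor$ consecutive reads and $l_p$ never hits $0$; then at $n' = n_0 + \lfloor h\rfloor = j + p \le n$ we have $l_p = i - 1 - \lfloor h\rfloor r$, and the paper's arithmetic gives $n' - l_p \ge p + \lfloor h\rfloor r \ge ep$. (b) A mismatch $text[l_p] \ne text[l_p + p]$ halts the loop; because $p$ is a period of $text[m..n]$, this can only happen at $l_p \le m - 1$. (c) $l_p$ reaches $0$. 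In both (b) and (c) the value of $l_p$ then freezes at some $l_p \le m - 1$, and by the time $n' = n$ we have $n - l_p \ge n - (m - 1) = k \ge ep$, so the detection test succeeds at $n' = n$ at the latest.

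The main obstacle is that the paper's earlier informal discussion only covers the ``typical'' case (a); the work of the proof is making the branching rigorous. This means (1) justifying that the matching algorithm reports the specific occurrence at $i+p$ (not some spurious one) so that the pair inserted has the correct $p$, (2) verifying the persistence invariant for $(p, l_p)$ throughout the window $[n_0..n]$, and (3) showing that cases (b) and (c) cannot leave the catcher stuck with $n' - l_p < ep$ all the way to $n' = n$, which hinges on the structural fact that periodicity of $text[m..n]$ forces any mismatch on the left extension to occur strictly left of $m$.
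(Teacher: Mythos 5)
Your proposal is correct and follows the same route as the paper's proof: choose a period $p$ of the repetition with $p \le k/e \le n-j$, observe that $text[i..j]$ (hence $v$) reoccurs at position $i+p$, and conclude that the catcher fires on that occurrence. The paper disposes of the last step in one sentence by appeal to the informal discussion of the pseudocode in Section~\ref{SectCatcher}, whereas you verify it explicitly (insertion and persistence of $(p,i-1)$ in $P$, and the three-way case analysis showing $n'-l_p\ge ep$ is reached by step $n$ even when the left extension stalls); this is a faithful and correct expansion of the argument the paper leaves implicit, not a different proof.
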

\begin{proof}
Let $p$ be the minimal period of $text[n{-}k{+}1..n]$. Since $text[i..j]$ is a substring of $text[n{-}k{+}1..n]$ and $p \le \frac{k}{e} \le n - j$, the string $text[i..j]$ occurs at position $i + p$. Thus, the catcher detects this $e$-repetition when processes this occurrence (see Fig.~\ref{fig:catcher}).
\qed
\end{proof}

We say that a \emph{catcher covers $[l..r]$} if the catcher is defined by integers $i$ and $j$ such that $n - i < n - r + 1 \le n - l + 1 \le e(n - j)$; by Lemma~\ref{CatcherTrap}, this condition implies that if for some $k \in [l..r]$, the suffix $text[k..n]$ is an $e$-repetition, then the catcher detects this repetition. We also say that the catcher \emph{covers a segment of length $r - l + 1$}. Note that if we append a letter to the end of $text$, the catcher still covers $[l..r]$. We say that a set $S$ of catchers covers $[l..r]$ if $\bigcup_{C \in S} [l_C..r_C] \supset [l..r]$, where $[l_C..r_C]$ is a segment covered by catcher $C$.

\section{Unordered Alphabet and Backtracking}\label{SectUnord}

\begin{theorem}
For unordered alphabet, there is an online algorithm with backtracking that detects $e$-repetitions in $O(n\log m)$ time and $O(m)$ space, where $m$ is the length of a longest string generated during the execution of a given sequence of $n$ backtrack and read operations. \label{UnorderedSquares}
\end{theorem}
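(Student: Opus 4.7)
The plan is to cover all possible lengths of suffix $e$-repetitions, at every moment of time, by the union of the length ranges handled by a family of $O(\log m)$ catchers arranged on a geometric scale. Fix a constant $\alpha\in(1,e)$ taken close enough to $1$ that $\alpha^2<e$. For each $s=0,1,\ldots,\lceil\log_\alpha m\rceil$, the algorithm maintains a catcher $C_s$ whose defining parameters $i_s,j_s$ are chosen, at the current time $n$, so that $n-i_s=\Theta(\alpha^s)$ and $e(n-j_s)=\Theta(\alpha^{s+2})$. By Lemma~\ref{CatcherTrap}, $C_s$ then detects every suffix $e$-repetition whose length lies in an interval of the form $[\alpha^s..\alpha^{s+2})$; since consecutive scales overlap by a factor of $\alpha$, the union of these intervals covers $[\lceil e\rceil..n]$ and every suffix $e$-repetition is caught by some $C_s$. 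With $j_s-i_s+1=\Theta(\alpha^s(1-\alpha^2/e))$, the constant $c:=e/(e-\alpha^2)$ satisfies the hypothesis $c(j_s-i_s+1)\ge n-i_s$ of Lemma~\ref{CatcherTime}, giving $O(1)$ time per catcher per read or backtrack and $O(\alpha^s)$ space. Summing over $s$ yields $O(\log m)$ time per operation and $O(\sum_s\alpha^s)=O(m)$ space.

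The first obstacle is that the ideal values of $i_s,j_s$ shift by one on each read while the defining parameters of a catcher are fixed upon creation. The plan is to recreate $C_s$ every $\Theta(\alpha^s)$ reads, each recreation priming the underlying real-time string matcher of Galil--Seiferas on the $\Theta(\alpha^s)$ characters of $text$ now to the right of the fresh $i_s$; this costs $O(\alpha^s)$ per recreation and amortizes to $O(1)$ per read at scale $s$, adding $O(\log m)$ amortized per read in total. Between recreations the range covered by $C_s$ drifts upward, but because the multiplicative width $\alpha^2$ strictly exceeds the gap $\alpha$ between consecutive scales, a case analysis shows that the union of the live ranges continues to contain $[\lceil e\rceil..n]$ throughout every recreation cycle, so Lemma~\ref{CatcherTrap} still guarantees detection of every suffix $e$-repetition.

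The main difficulty is combining the recreation scheme with backtracking. Each individual catcher already supports single-step rollback through the state array promised by Lemma~\ref{CatcherTime}, so a backtrack that does not cross a recreation of any catcher is handled simply by popping one entry from each catcher's state array. A backtrack that crosses the most recent recreation of some $C_s$, however, must restore the previous instance of $C_s$; the plan is to retain the obsolete instance of $C_s$ together with its own state array alongside the new one throughout the subsequent recreation cycle, switching back to it whenever a backtrack crosses the recreation point. Since at each scale only the current instance and at most one retained predecessor coexist at any moment, the extra space per scale remains $O(\alpha^s)$ and the $O(m)$ total space bound is preserved.
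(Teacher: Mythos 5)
Your overall architecture matches the paper's: a logarithmic family of catchers on a geometric scale whose covered ranges overlap, periodic recreation of each catcher with the $\Theta(\alpha^s)$ priming cost amortized over $\Theta(\alpha^s)$ operations, and retention of obsolete catcher instances to cope with backtracking. (The paper uses a dyadic scale with $s=\Theta(\frac{e}{e-1})$ catchers per level and covers starting \emph{positions} rather than lengths, but that is a cosmetic difference; your use of Lemmas~\ref{CatcherTrap} and~\ref{CatcherTime} is sound, modulo the need to handle the $O(1)$ many very short $e$-repetitions naively when $e<2$, and modulo the standing assumption that $text[1..n{-}1]$ is $e$-repetition-free, which condition (i) of Lemma~\ref{CatcherTime} requires.)

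The genuine gap is in the backtracking part, which is exactly where the difficulty of the theorem lies. First, you never argue the \emph{time} bound for sequences that mix reads and backtracks: your amortization of the $O(\alpha^s)$ recreation cost is stated only for forward reads. An adversary can alternate read and backtrack across a recreation point of a large scale $s$; unless you prove that between two events each costing $\Theta(\alpha^s)$ (a recreation, or the discarding that forces a later recreation) there must lie $\Theta(\alpha^s)$ operations \emph{of either kind}, the bound degenerates to $O(n\cdot m)$ --- this is precisely the $O(n^2)$ trap the paper points out and then defuses by letting each catcher survive $\Theta(2^k)$ further steps in both directions after being scheduled for deletion. Second, your mechanism of ``switching back'' to the retained predecessor on a backtrack and (implicitly) switching forward to the frozen newer instance on a subsequent read is incorrect as stated: after a backtrack below the recreation point $n_0$, the letter next read at position $n_0$ may differ from the old one, so the newer instance's saved creation state (which was primed on the old $text[i_s..n_0]$) is stale. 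To reuse it in $O(1)$ you must be able to roll it back to its state at position $n_0-1$ and feed it the new letter, which requires its state array to extend back through all the priming positions down to $j_s$ --- the paper gets this for free because every catcher uniformly stores one state per text position and is only destroyed once $n$ drifts $\Theta(2^k)$ away from its creation point. You need to add both the bidirectional-lifetime amortization argument and the per-position rollback of retained instances for the proof to go through.
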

\begin{proof}
As above, denote $n = |text|$. If $text$ is not $e$-repetition-free, our algorithm skips all read operations until backtrack operations make $text$ $e$-repetition-free. Therefore, in what follows we can assume that $text[1..n{-}1]$ is $e$-repetition-free and thus, all $e$-repetitions of $text$ are suffixes. In our proof we first give an algorithm without backtracking and then improve it to support the backtrack operation.

\textbf{The algorithm without backtracking.} Our algorithm maintains $O(\log n)$ catchers that cover $[1..n{-}O(1)]$ and therefore ``catch'' almost all $e$-repetitions. For each $k \in [0..\log n]$, we have a constant number of catchers covering adjacent segments of length $2^k$. These segments are of the form $(l2^k..(l{+}1)2^k]$ for some integers $l \ge 0$ precisely defined below. Let us fix an integer constant $s$ for which it is possible to create a catcher covering $(n{-}s2^k..n{-}(s{-}1)2^k]$. To show that such $s$ exists, consider a catcher defined by $i = j = n - (s - 1)2^k$. By Lemma~\ref{CatcherTrap}, this catcher covers $(n{-}s2^k..n{-}(s{-}1)2^k]$ iff $e(n - j) = e(s - 1)2^k \ge s2^k$ or, equivalently, $s \ge \lceil\frac{e}{e - 1}\rceil$. As it will be clear below, to make our catchers fast, we must assume that $s > \frac{e}{e-1}$. Note that $s \ge 2$ since $e > 1$, and $s = 2$ implies $e > 2$.

Now we precisely describe the segments covered by our catchers. Denote $t_r = \max\{0, n - ((s - 1)2^r + (n \bmod 2^r))\}$. For any integer $r \ge 0$, $t_r$ is a nonnegative multiple of $2^r$. Let $k \in [0..\log n]$. The algorithm maintains catchers covering the following segments: $(t_{k+1}..t_{k+1} + 2^k], (t_{k+1} + 2^k..t_{k+1} + 2\cdot 2^k], (t_{k+1} + 2\cdot 2^k..t_{k+1} + 3\cdot 2^k], \ldots, (t_k - 2^k..t_k]$ (see Fig.~\ref{fig:system}). Thus, there are at most $\frac{1}{2^k}(t_k - t_{k+1}) \le s$ catchers for each such~$k$. Obviously, the constructed segments cover $[1..n{-}s{+}1]$.
\begin{figure}[htb]
\vskip-4mm
\includegraphics[scale=0.55]{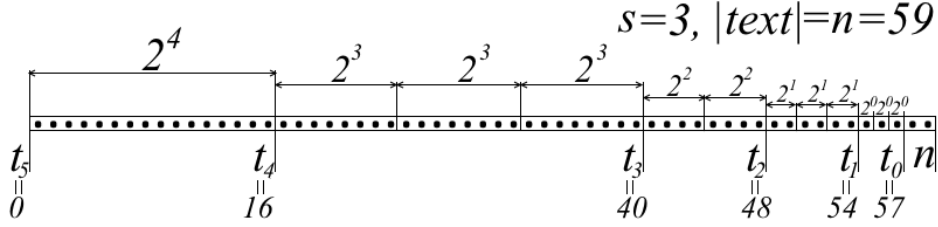}
\vskip-4mm
\caption{\small A system of catchers covering $[1..n{-}s{+}1]$.}
\label{fig:system}
\vskip-5mm
\end{figure}

To maintain this system of catchers, the algorithm loops through all $k \in [0..\log n]$ such that $s2^k \le n$ and, if $n$ is a multiple of $2^k$, creates a new catcher covering $(n - s2^k..n - (s - 1)2^k]$; if, in addition, $n$ is a multiple of $2^{k+1}$, the algorithm removes two catchers covering $(n - s2^{k+1}..n - s2^{k+1} + 2^k]$ and $(n - s2^{k+1} + 2^k..n - (s - 1)2^{k+1}]$. To prove that the derived system covers $[1..n{-}s{+}1]$, it suffices to note that if an iteration of the loop removes two catchers covering $(b_1..b_2]$ and $(b_2..b_3]$, for some $b_1, b_2, b_3$, then the next iteration creates a catcher covering $(b_1..b_3]$. We detect $e$-repetitions of lengths $2, 3, \ldots, s{-}1$ by a simple naive algorithm. In the following pseudocode we use the three-operand $\mathbf{for}$ loop like in the C language.
\begin{algorithmic}[1]
\State read a letter and append it to $text$ (thereby incrementing $n$)
\State check for $e$-repetitions of length $2, 3, \ldots, s{-}1$
\For{$(k \gets 0;\; s2^k \le n \mathrel{\mathbf{and}} n \bmod 2^k = 0;\;k \gets k + 1)$}
    \State create a catcher covering $(n - s2^k..n - (s - 1)2^k]$
    \If{$n \bmod 2^{k+1} = 0 \mathrel{\mathbf{and}} n - s2^{k+1} \ge 0$}
        \State remove the catcher covering $(n - s2^{k+1}..n - s2^{k+1} + 2^k]$
        \State remove the catcher covering $(n - s2^{k+1} + 2^k..n - (s - 1)2^{k+1}]$
    \EndIf
\EndFor
\end{algorithmic}
When the algorithm creates a catcher covering $(n - s2^k .. n - (s - 1)2^k]$, it has some freedom choosing integers $i$ and $j$ that define this catcher. We put $i = n - (s - 1)2^k$ and $j = \max\{i, n - \lceil \frac{s}{e}2^k\rceil\}$. Indeed, in the case $j \ne i$ we have $e(n - j) = e\lceil \frac{s}{e}2^k\rceil \ge s2^k$ and, by Lemma~\ref{CatcherTrap}, the catcher covers $(n - s2^k .. n - (s - 1)2^k]$; the case $j = i$ was considered above when we discussed the value of $s$.

Clearly, the proposed algorithm is correct. Now it remains to estimate the consumed time and space. Consider a catcher defined by integers $i$ and $j$ and covering a segment of length $2^k$. Let us show that $j - i + 1 > \alpha 2^k$ for a constant $\alpha > 0$ depending only on $e$ and $s$. We have $j - i + 1 = (s - 1)2^k - \lceil\frac{s}{e}2^k\rceil + 1 > ((s - 1) - \frac{s}{e})2^k$. The inequality $s > \frac{e}{e - 1}$ implies $(s - 1) - \frac{s}{e} > 0$ (here we use the fact that $s$ is strictly greater than $\frac{e}{e -1}$). Hence, we can put $\alpha = (s - 1) - \frac{s}{e}$.

Denote by $n'$ the value of $n$ at the moment of creation of the catcher. The algorithm removes this catcher when either $n' = n - s2^k$ or $n' = n - (s - 1)2^k$. Thus, since $j - i + 1 > \alpha 2^k$ for some $\alpha > 0$, it follows from Lemma~\ref{CatcherTime} that the catcher requires $O(1)$ time at each read operation and occupies $O(2^k)$ space. Hence, all catchers take $O(s\sum_{k=0}^{\log m} 2^k) = O(m)$ space and the algorithm requires $O(\log m)$ time at each read operation if we don't count the time for creation of catchers. We don't estimate this time in this first version of our algorithm.

\textbf{The algorithm with backtracking.} Now we modify the proposed algorithm to support the backtracking. Denote $n' = n + 1$. The backtrack operation is simply a reversed read operation: we loop through all $k\in [0..\log n']$ such that $s2^k \le n'$ and, if $n'$ is a multiple of $2^k$, remove the catcher covering $(n' - s2^k .. n' - (s - 1)2^k]$; if, in addition, $n'$ is a multiple of $2^{k+1}$, the algorithm creates two catchers covering $(n - s2^{k+1}..n - s2^{k+1} + 2^k]$ and $(n - s2^{k+1} + 2^k..n - (s - 1)2^{k+1}]$. Clearly, this solution is slow: if $n = 2^p$ for some integer $p$, then $n$ consecutive backtrack and read operations require $O(n^2)$ time.

To solve this problem, we make the life of catchers longer. In the modified algorithm, the read and backtrack operations don't remove catchers but mark them as ``removed'' and the marked catchers still work some number of steps. If a backtrack or read operation tries to create a catcher that already exists but is marked as ``removed'', the algorithm just deletes the mark.

How long is the life of marked catcher? Consider a catcher defined by $i = n' - (s - 1)2^k$ and $j = \max\{i, n' - \lceil \frac{s}{e}2^k\rceil\}$, where $n'$ is the value of $n$ at the moment of creation of the catcher in the corresponding read operation. The read operation marks the catcher as ``removed'' when either $n' = n - s2^k$ or $n' = n - (s - 1)2^k$; our modified algorithm removes this marked catcher when $n' = n - (s + 1)2^k$ or $n' = n - s2^k$ respectively, i.e., the catcher ``lives'' additional $2^k$ steps. The backtrack operation marks the catcher as ``removed'' when $n' = n + 1$; we remove this catcher when $n' = n + \min\{2^k, n' - j\}$ (recall that the catcher cannot exist if $n < j$), i.e., the catcher ``lives'' additional $\min\{2^k, \lceil\frac{s}{e}2^k\rceil\} = \Theta(2^k)$ steps.

Let us analyze the time and space consumed by the algorithm. It is easy to see that for any $k\in [0..\log n]$, there are at most $s{+}2$ catchers covering segments of length $2^k$. The worst case is achieved when we have $s$ working catchers and two marked catchers. Now it is obvious that the modified algorithm, as the original one, takes $O(m)$ space and requires $O(\log m)$ time in each read or backtrack operation if we don't count the time for creation of catchers. The key property that helps us to estimate this time is that once a catcher covering a segment of length $2^k$ is created, it cannot be removed during any sequence of $\Theta(2^k)$ backtrack and read operations. To create this catcher, the algorithm requires  $\Theta(2^k)$ time and hence, this time for creation is amortized over the sequence of $\Theta(2^k)$ backtrack and read operations. Thus, the algorithm takes $O(n\log m)$ overall time, where $n$ is the number of read and backtrack operations.
\qed
\end{proof}

\section{Ordered Alphabet} \label{SectOrd}

It turns out that in some natural cases we can narrow the area of $e$-repetition search. More precisely, if $text[1..n{-}1]$ is $e$-repetition-free, then the length of any $e$-repetition of $text$ is close to the length of the shortest suffix $v$ of $text$ such that $v$ does not occur in $text[1..n{-}1]$. In the sequel, $v$ is referred to as the \emph{shortest unioccurrent suffix of $text$}. Denote $t = |v|$. Suppose $u$ is a suffix of $text$ such that $u$ is an $e$-repetition. Let us first consider some specific values of $e$.
\begin{example}
Let $e = 5$. We prove that $t \le |u| < \frac{5}{4}t$. Denote by $p$ a period of $u$ such that $5p \le |u|$. Since the suffix of length $t{-}1$ occurs in $text[1..n{-}1]$ and $text[1..n{-}1]$ is $5$-repetition-free, we have $|u| \ge t$. Suppose, to the contrary, $|u| \ge t + \frac{1}{4}t$. Then $t + p \le t + \frac{1}{5}|u| \le |u|$ and $text[n{-}t{+}1..n] = text[n{-}t{-}p{+}1..n{-}p]$ by periodicity of $u$ (see Fig.~\ref{fig:repet} a), a contradiction to the definition of $t$.
\end{example}
\begin{example}
Let $e = 1.5$. We show that $t \le |u| < \frac{1.5}{0.5}t$. As above, we have $|u| \ge t$. Denote by $p$ a period of $u$ such that $1.5p \le |u|$. Suppose $|u| \ge t + \frac{1}{0.5} t$ (or $t \le \frac{0.5}{1.5}|u|$); then $t + p \le t + \frac{1}{1.5}|u| \le \frac{0.5}{1.5}|u| + \frac{1}{1.5}|u| = |u|$ and $text[n{-}t{+}1..n] = text[n{-}t{-}p{+}1..n{-}p]$ (see Fig.~\ref{fig:repet} b), which contradicts to the definition of $t$.
\end{example}
\begin{figure}[!htb]
\vspace*{-7mm}
\small a\includegraphics[scale=0.50, clip, trim=0 0 0 5]{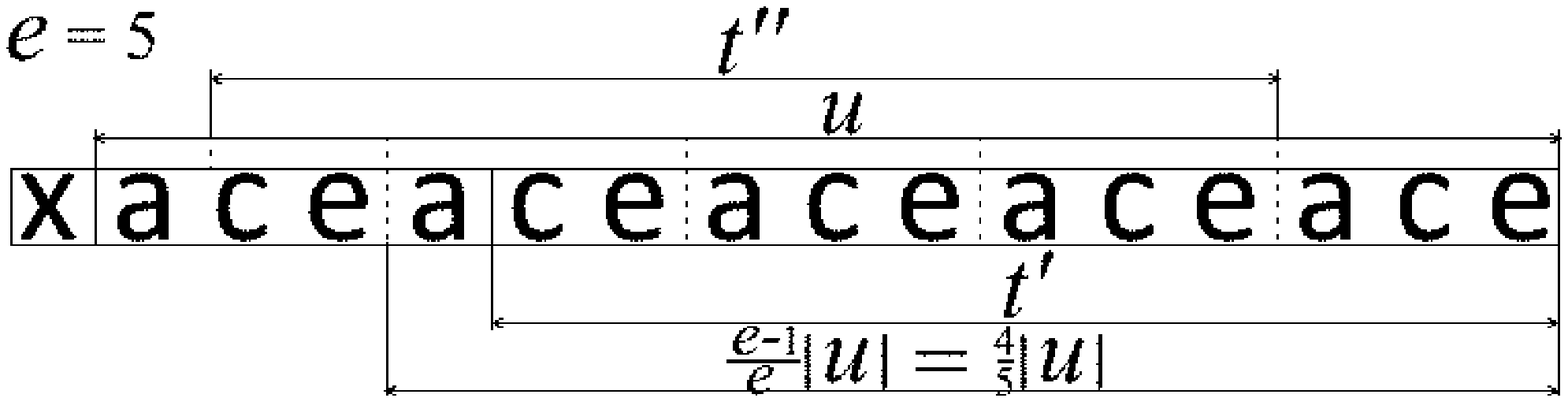}
\small b\includegraphics[scale=0.50, clip, trim=0 0 0 0]{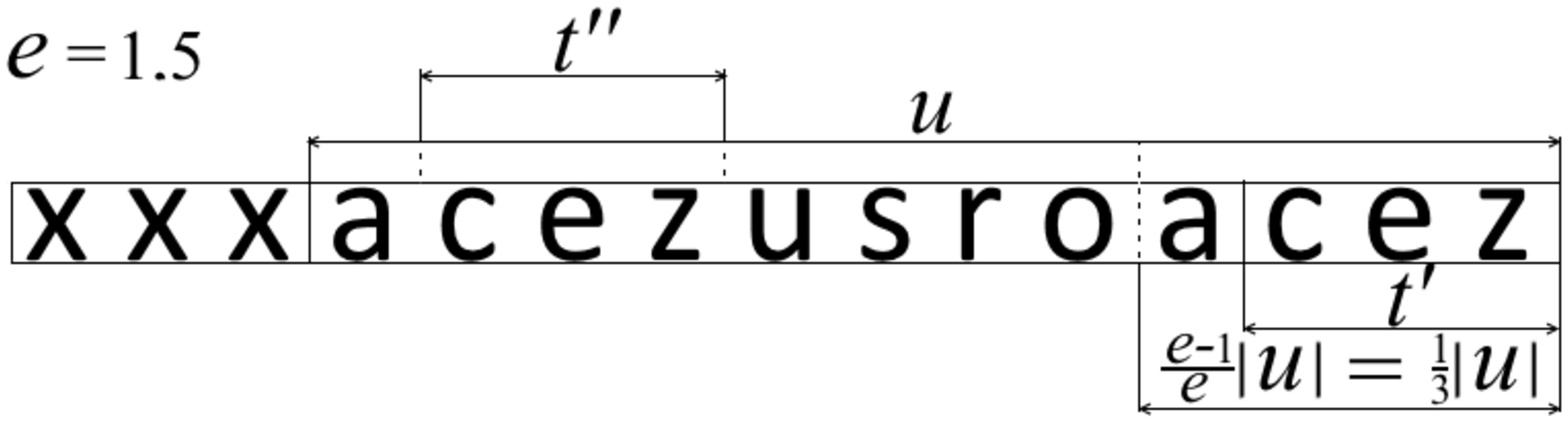}
\vskip-2mm
\caption[]{\small (a) $n = 16$, $u = text[2..n]$, $t = 13$, $t' = 11$, $text[n{-}t'{+}1..n]{=}text[n{-}t'{-}2..n{-}3]$;\\
(b) $n = 15$, $u = text[4..n]$, $t = 5$, $t' = 3$, $text[n{-}t'{+}1..n] = text[n{-}t'{-}7..n{-}8]$.}
\vskip-5mm
\label{fig:repet}
\end{figure}
\begin{lemma}
Let $t$ be the length of the shortest unioccurrent suffix of $text$, and $u$ be an $e$-repetition of $text$. If $text[1..n{-}1]$ is $e$-repetition-free, then $t \le |u| < \frac{e}{e - 1}t$.\label{RepetitionLocation}
\end{lemma}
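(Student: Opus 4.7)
The plan is to abstract the computations in the two preceding examples. The lower bound $t \le |u|$ will come from the definition of $v$ combined with the $e$-repetition-freeness of $text[1..n{-}1]$, while the upper bound $|u| < \frac{e}{e-1}t$ will come from a periodicity argument: if $u$ were too long, then we could translate the suffix of length $t$ leftwards along $u$ by a period, producing a second occurrence of $v$ that contradicts its minimality.

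For the lower bound I would argue by contradiction. Suppose $|u| < t$. Since $v$ is the \emph{shortest} suffix of $text$ that does not occur in $text[1..n{-}1]$, every strictly shorter suffix has an earlier occurrence; in particular $u$ occurs at some position $k \le n - |u|$, so $text[k..k+|u|-1]$ is a copy of $u$ lying entirely inside $text[1..n{-}1]$. But that copy is itself an $e$-repetition, contradicting the hypothesis that $text[1..n{-}1]$ is $e$-repetition-free. Hence $|u| \ge t$.

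For the upper bound, suppose for contradiction that $|u| \ge \frac{e}{e-1}t$, equivalently $t \le \frac{e-1}{e}|u|$. Choose a period $p$ of $u$ witnessing that $u$ is an $e$-repetition, so $ep \le |u|$, equivalently $p \le \frac{1}{e}|u|$. Adding the two bounds gives $t + p \le \frac{e-1}{e}|u| + \frac{1}{e}|u| = |u|$, so the block of length $t$ at the right end of $u$ can be shifted $p$ positions to the left and still stay inside $u$. By the periodicity of $u$ we obtain $text[n{-}t{+}1..n] = text[n{-}t{-}p{+}1..n{-}p]$, which produces a second occurrence of $v$ at position $n - t - p + 1 < n - t + 1$, contradicting unioccurrence. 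Therefore $|u| < \frac{e}{e-1}t$.

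The argument presents no real obstacle, only careful bookkeeping. The one subtle point worth flagging is that $p$ must be chosen as a period \emph{witnessing} $u$ being an $e$-repetition (so $ep \le |u|$) rather than an arbitrary period of $u$, because only this choice makes the two bounds $t \le \frac{e-1}{e}|u|$ and $p \le \frac{1}{e}|u|$ sum to $t + p \le |u|$, which is precisely what is needed for the periodicity translation to keep $v$ inside $u$.
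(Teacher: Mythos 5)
Your proof is correct and follows essentially the same route as the paper's: the lower bound from the fact that any shorter suffix (hence $u$) would occur, and thus be an $e$-repetition, inside $text[1..n{-}1]$; the upper bound by choosing a period $p$ with $ep\le|u|$ (the paper takes the minimal period, which satisfies the same inequality), deducing $t+p\le|u|$, and shifting the length-$t$ suffix left by $p$ to contradict the unioccurrence of $v$.
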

\begin{proof}
Clearly, $u$ is a suffix. We have $t \le |u|$ since the suffix of length $t{-}1$ occurs in $text[1..n{-}1]$ and $text[1..n{-}1]$ is $e$-repetition-free. Suppose, to the contrary, $|u| \ge \frac{e}{e - 1} t$ (or $t \le \frac{e-1}{e}|u|$). Denote by $p$ the minimal period of $u$. We have $p \le \frac{1}{e}|u|$. Further, we obtain $t + p \le t + \frac{1}{e}|u| \le \frac{e - 1}{e} |u| + \frac{1}{e}|u| = |u|$, i.e., $t + p \le |u|$. Finally, since $p$ is a period of $u$, we have $text[n{-}t{+}1..n] = text[n{-}t{-}p{+}1..n{-}p]$ (see Fig.~\ref{fig:repet} a,b). This contradicts to the definition of $t$.
\qed
\end{proof}

Lemma~\ref{RepetitionLocation} describes the segment in which our algorithm must search $e$-re\-pe\-ti\-tions. To cover this segment by catchers, we use the following technical lemma.
\begin{lemma}
Let $l$ and $r$ be integers such that $0 \le l \le r < n$ and $c(n - r) > n - l$ for a constant $c > 0$. Then there is a set of catchers $\{c_k\}_{k=0}^m$ covering $(l..r]$ such that $m$ is a constant depending on $c$ and each $c_k$ is defined by integers $i_k$ and~$j_k$ such that $j_k - i_k + 1 \ge \frac{e - 1}{2e}(n - r)$.\label{SegmentCover}
\end{lemma}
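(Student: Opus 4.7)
The plan is a geometric tiling construction. Set $N := n - r$ and produce catchers $C_1, C_2, \ldots, C_m$ that cover consecutive intervals $(a_k..a_{k-1}]$, where $a_0 = r$ and $a_m \le l$, with the $a_k$ decreasing geometrically fast toward $l$.

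For each $k \ge 1$, I define
\[ i_k := a_{k-1}+1, \qquad j_k := a_{k-1} + \left\lceil \tfrac{e-1}{2e}\, N \right\rceil. \]
The length requirement $j_k - i_k + 1 \ge \frac{e-1}{2e}(n-r)$ is then immediate by construction. I next take $a_k$ to be the smallest integer for which Lemma~\ref{CatcherTrap} still guarantees that $C_k$ covers $(a_k..a_{k-1}]$, that is, the smallest integer satisfying $n - a_k + 1 \le e(n - j_k)$.

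Writing $x_k := n - a_k$, this reduces (up to $O(1)$ additive slack from the ceilings) to the linear recurrence $x_k = e\, x_{k-1} - \frac{e-1}{2} N$, which has fixed point $N/2$ and closed-form solution $x_k = \frac{N}{2}(1 + e^k)$. Thus the $x_k$ grow geometrically with ratio $e$. Since the hypothesis $c(n-r) > n - l$ gives $n - l < cN$, it suffices to take $m$ with $x_m \ge n - l$, which is achieved as soon as $e^m \ge 2c - 1$. So $m = \lceil \log_e(2c - 1) \rceil$ is enough, a constant depending only on $c$ and $e$, and then $\bigcup_{k=1}^m (a_k..a_{k-1}] \supseteq (l..r]$.

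The main technical nuisance will be verifying validity of the catcher parameters, namely $1 \le i_k \le j_k < n$, together with controlling the $O(1)$ rounding slack from the ceilings. Both become problematic only when $N$ is bounded by a small absolute constant; but in that regime the hypothesis $c(n-r) > n-l$ also forces $r - l < (c - 1)N = O(1)$, so the whole interval $(l..r]$ can be handled by $O(1)$ trivial catchers with $j = i$, the length constraint being satisfied for free since its right-hand side $\tfrac{e-1}{2e}N$ is itself bounded by the same constant. Combining the two regimes yields the desired bound on the number of catchers depending only on $c$ (and on the fixed $e$).
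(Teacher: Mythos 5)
Your proposal is correct and is essentially the paper's own argument: both cover $(l..r]$ greedily from the right by catchers whose reach, guaranteed by Lemma~\ref{CatcherTrap}, recedes from $n$ geometrically (your recurrence $x_k = e\,x_{k-1}-\frac{e-1}{2}N$ with fixed point $N/2$ plays the role of the paper's explicit depths $(e\alpha)^k s$ with $\alpha=\frac{e+1}{2e}$), so a constant number of catchers depending only on $c$ and $e$ suffices. The only real difference is that you pin every pattern length at $\lceil\frac{e-1}{2e}N\rceil$ whereas the paper lets it grow geometrically with $k$; both meet the required lower bound, and your explicit attention to rounding and to the small-$N$ regime addresses boundary cases that the paper's proof silently glosses over.
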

\begin{proof}
Let us choose a number $\alpha$ such that $0 < \alpha < 1$. Denote $n - r = s$. Consider the following set of catchers $\{c_k\}_{k=0}^m$: $c_k$ is defined by integers $i_k = n - \lceil(e\alpha)^ks\rceil$ and $j_k = n - \lceil\alpha(e\alpha)^ks\rceil$ (see Fig.~\ref{fig:cover}). Denote $i'_k = n - (e\alpha)^ks$ and $j'_k = n - \alpha(e\alpha)^ks$. By Lemma~\ref{CatcherTrap}, $c_k$ covers $(n - e(n - j'_k) .. i'_k] = (n - (e\alpha)^{k+1}s..i'_k]$. Thus, for any $k \in [0..m{-}1]$, the catcher $c_k$ covers $(i'_{k+1}..i'_k]$ and therefore, the set $\{c_k\}_{k=0}^m$ covers the following segment:
$$
(n - (e\alpha)^{m+1}s .. i'_m] \cup (i'_m .. i'_{m-1}] \cup (i'_{m-1} .. i'_{m-2}] \cup \ldots \cup (i'_1..i'_0] = (n - (e\alpha)^{m+1}s..r].
$$
Hence, if $e\alpha > 1$ and $(e\alpha)^{m+1}s \ge cs$, the set $\{c_k\}_{k=0}^m$ covers $(n - cs .. r] \supset (l..r]$. Thus to cover $(l..r]$, we can, for example, put $\alpha = \frac{e + 1}{2e}$ and $m + 1 = \lceil \frac{\log c}{\log(e\alpha)} \rceil = \lceil \frac{\log c}{\log(e + 1) - 1} \rceil$. Finally for $k \in [0..m]$, we have $j_k - i_k + 1 = \lceil(e\alpha)^ks\rceil - \lceil\alpha(e\alpha)^ks\rceil + 1 \ge (e\alpha)^ks - (\alpha(e\alpha)^ks + 1) + 1 = (e\alpha)^k(1 - \alpha)s \ge (1 - \alpha)s = \frac{e - 1}{2e}(n - r)$.
\qed
\end{proof}
\begin{figure}[htb]
\vskip-10mm
\includegraphics[scale=0.55]{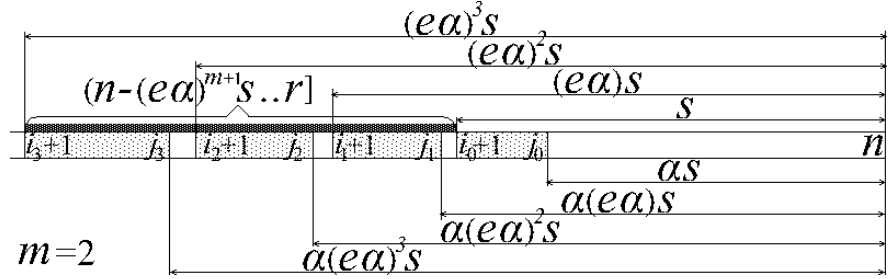}
\vskip-3mm
\caption{\small The system $\{c_k\}_{k=0}^m$ with $m = 2$ ($c_3$ is depicted for clarity), $e \approx 1.5$, $\alpha \approx \frac{5}{6}$.}
\vskip-5mm
\label{fig:cover}
\end{figure}

For each integer $i > 0$, denote by $t_i$ the length of the shortest unioccurrent suffix of $text[1..i]$. We say that there is an online access to the sequence $\{t_i\}$ if any algorithm that reads the string $text$ sequentially from left to right can read $t_i$ immediately after reading $text[i]$. The following lemma describes an online algorithm for $e$-repetition detection based on an online access to $\{t_i\}$. Note that the alphabet is not necessarily ordered.
\begin{lemma}
If there is an online access to the sequence $\{t_i\}$, then there exists an algorithm that online detects $e$-repetitions in linear time and space. \label{OrderedLemma}
\end{lemma}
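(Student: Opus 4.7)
The plan is to maintain, at every read operation, a constant number of catchers whose covered segments together contain the interval in which, by Lemma~\ref{RepetitionLocation}, any new $e$-repetition must start. Because we may assume $text[1..n{-}1]$ is $e$-repetition-free, Lemma~\ref{RepetitionLocation} confines the starting position $k$ of an $e$-repetition of $text$ to the window
$$W_n = \bigl(n - \tfrac{e}{e-1}t_n \,..\, n - t_n + 1\bigr].$$
Once $t_n$ exceeds an absolute constant, the ratio $(n - l)/(n - r)$ with $l, r$ its endpoints is less than $\tfrac{e}{e-1} + \epsilon$ for any preselected $\epsilon > 0$, so Lemma~\ref{SegmentCover} applied with any fixed $c > \frac{e}{e-1}$ produces a covering of $W_n$ by $O(1)$ catchers whose defining pairs $(i, j)$ satisfy $j - i + 1 = \Omega(t_n)$ and $n - i = O(t_n)$. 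By Lemma~\ref{CatcherTime} each such catcher uses $O(1)$ time per read and $O(t_n) \le O(n)$ space. Potential $e$-repetitions of length below the aforementioned constant are detected naively in $O(1)$ time per read.

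A catcher's covered segment $[n + 1 - e(n - j)\,..\,i]$ has a fixed right endpoint $i$ while its left endpoint moves leftward as $n$ grows, so the only way the active cover can fail to contain $W_n$ is that the right end $n - t_n + 1$ of $W_n$ drifts past the largest active $i$, call it $i^*$. Whenever that happens I rebuild the cover ``generously'': the new catchers are chosen so that $i^*$ is placed at roughly $n - t_n/2 + 1$ instead of $n - t_n + 1$, so that small subsequent fluctuations of $t_n$ do not immediately retrigger a rebuild. Since $t_n$ grows by at most one per read operation (the new suffix of length $t_{n-1}{+}1$ is still unioccurrent in $text[1..n]$), the right endpoint of $W_n$ cannot overshoot the generous $i^*$ until either $n$ has advanced or $t_n$ has fallen by a constant fraction of $t_{n-1}$.

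The principal obstacle is the amortized analysis of catcher construction, which costs $\Theta(n - i^*_{\mathrm{new}}) = \Theta(t_n)$ per rebuild because each fresh catcher must be fed the characters $text[i{+}1..n]$. I would employ the potential function $\Phi_n = A\,(n - i^*_n)$ with a sufficiently large constant $A$. Between rebuilds $i^*_n$ is fixed while $n$ grows, so $\Phi$ grows by $A$ per step and contributes only $O(1)$ to the amortized per-read cost. At a rebuild triggered by $n - t_n + 1 > i^*_{\mathrm{old}}$, a short calculation using this trigger condition together with the generous placement $i^*_{\mathrm{new}} \approx n - t_n/2$ shows that the decrease $A\,(i^*_{\mathrm{new}} - i^*_{\mathrm{old}})$ in $\Phi$ dominates the $\Theta(t_n)$ construction cost once $A$ is large enough, in both the ``drift'' scenario where $n - n_0 = \Omega(t_n)$ since the last rebuild and the ``drop'' scenario where the fall $t_{n_0} - t_n = \Omega(t_n)$ has forced the trigger immediately. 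Combined with the $O(1)$ per-read cost of the active catchers, this yields overall time $O(n)$ and space $O(n)$.\qed
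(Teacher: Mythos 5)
Your overall architecture is the same as the paper's (Lemma~\ref{RepetitionLocation} to localize the repetition, Lemma~\ref{SegmentCover} to cover the window with $O(1)$ catchers, generous margins plus amortization over rebuilds), but there are two concrete gaps, both stemming from your claim that the only failure mode is the right end of $W_n$ drifting past $i^*$.

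First, the left end of $W_n$ can escape the cover when $1 < e < 2$. A catcher's covered segment $(n - e(n-j)\,..\,i]$ extends leftward by exactly $e-1$ per read, whereas $l_n = n - \frac{e}{e-1}t_n$ moves leftward by $\frac{1}{e-1}$ per read whenever $t_n$ increases by one; for $e<2$ we have $\frac{1}{e-1} > e-1$, so after $\Theta(t)$ steps of steadily growing $t_n$ the window's left end outruns the union of covered segments and an $e$-repetition starting in the uncovered gap is missed. This is why the paper's pseudocode has the explicit trigger $l_n < l$ and why its analysis separately counts the $\ge t_{n'}$ increments of $n$ needed to reach it. Second, and for every $e$: a catcher's per-read cost is not $O(1)$ forever. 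By Lemma~\ref{CatcherTime} it is $O(c+1)$ with $c \ge (n-i)/(j-i+1)$, which grows as $n$ advances while $i,j$ stay fixed. In the regime where $t_n$ increases by one at every step (realizable for $e$-repetition-free text, e.g.\ while reading the second occurrence of a long aperiodic repeated factor), $r_n = n - t_n + 1$ is constant, your only rebuild trigger never fires, and the total work of the stale catchers over $N$ steps is $\Omega(N^2/t)$. The paper's third trigger condition $n - r > 2(r-l)$ exists precisely to force a rebuild before this degradation, keeping $n - i = O(t_{n'})$ and hence $c = O(1)$ for every live catcher. Your potential-function accounting for the right-drift trigger is essentially sound (and parallels the paper's counting of increments of $n$ and of the monotone quantity $r_n$), but the amortization cannot rescue an algorithm that, in the scenarios above, either misses repetitions or never rebuilds at all; you need all three of the paper's rebuild conditions, not just the one.
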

\begin{proof}
Our algorithm online reads the string $text$ while $text$ is $e$-repetition-free. Let $n = |text|$. Denote $l_n = \max\{0, n - \frac{e}{e - 1}t_n\}$ and $r_n = n - t_n + 1$. By Lemma~\ref{RepetitionLocation}, to detect $e$-repetitions, it suffices to have a set of catchers covering $(l_n..r_n]$. But if the set covers only $(l_n..r_n]$, then we will have to update the catchers in each step $i$ such that $r_{i-1} < r_i$ or $l_{i-1} > l_i$. To reduce the number of updates, we cover $(l_n..r_n]$ with significantly long left and right margins. Thus, some changes of $l_n$ and $r_n$ can be made without rebuilding of catchers.

We maintain two variables $l$ and $r$ such that $l \le l_n \le r_n \le r$. Initially $l = r = 0$. To achieve linear time, we also require $n - r \le 2(r - l)$. The following pseudocode explains how we choose $l$ and $r$:
\begin{algorithmic}[1]
\State read a letter and append it to $text$ (thereby we increment $n$ and read $t_n$)
\State $l_n \gets \max\{0, n - \frac{e}{e - 1}t_n\};\;r_n \gets n - t_n + 1;$
\If{$l_n < l \mathrel{\mathbf{or}} r_n > r \mathrel{\mathbf{or}} n - r > 2(r - l)$} \label{lst:updateCond}
    \State $l \gets \max\{0, n - \frac{2e}{e - 1}t_n\};\;r \gets n - \frac{1}{2}t_n;$
    \State update catchers to cover $(l..r]$\label{lst:update}
\EndIf
\end{algorithmic}
The correctness is clear. Consider the space requirements. Since $n - r = \frac{1}{2}t_n$ and $n - l = \min\{n, \frac{2e}{e - 1}t_n\}$, it follows that $c(n - r) > n - l$ for any $c > 4\frac{e}{e - 1}$. Therefore, by Lemma~\ref{SegmentCover}, the algorithm uses a constant number of catchers and hence requires at most linear space. Denote by $m$ the number of catchers.

Let us estimate the running time. Observe that $r_n$ never decreases. In our analysis, we assume that to increase $r_n$, the algorithm performs $r_n - r_{n-1}$ increments. Obviously, our assumption does not affect the overall running time: to process any string of length $k$, the algorithm executes at most $k$ increments. Also the algorithm performs $k$ increments of $n$. We prove that the time required to maintain catchers is amortized over the sequence of increments of $r_n$ and $n$.

Suppose the algorithm creates a set of catchers $\{c_k\}_{k=1}^m$ at some point. Denote by $n'$ the value of $n$ at this moment. Let us prove that it takes $O(t_{n'})$ time to create this set. For $k \in [1..m]$, let $c_k$ be defined by $i_k$ and $j_k$. By Lemma~\ref{SegmentCover}, for each $k \in [1..m]$, we have $j_k - i_k + 1 \ge \frac{e-1}{2e}(n' - r)$. Since $n' - r \ge \frac{e - 1}{4e}(n' - l) \ge \frac{e - 1}{4e}(n' - i_k)$, we obtain $c(j_k - i_k + 1) \ge n' - i_k$ for any $c \ge 8e^2 / (e - 1)^2$. Hence, by Lemma~\ref{CatcherTime}, it takes $O(n' - i_k)$ time to create the catcher $c_k$. Note that $n' - i_k \le n' - l \le \frac{2e}{e - 1}t_{n'}$ and $\frac{1}{2}t_{n'} \le n' - i_k$, i.e., $n' - i_k = \Theta(t_{n'})$. Therefore, to build the set $\{c_k\}_{k=1}^m$, the algorithm requires $O(\sum_{k=1}^m (n' - i_k)) = O(t_{n'})$ time.

Let us prove that to update the set $\{c_k\}_{k=1}^m$, the algorithm must execute $\Theta(t_{n'})$ increments of $n$ or $r_n$. Consider the conditions of line~\ref{lst:updateCond}:
\begin{enumerate}
\item To satisfy $l_n < l$ (clearly $l > 0$ in this case), since we have $l_{n-1} - l_n \le \frac{e}{e - 1}$ for any $n$, we must perform at least $(l_{n'} - l) / \frac{e}{e-1} = t_{n'}$ increments of $n$.
\item To satisfy $r_n > r$, we must execute $\lceil r - r_{n'}\rceil = \lceil t_n/2\rceil$ increments of $r_n$.
\item To satisfy $n - r > 2(r - l)$, since $n - r = \frac{1}{2}t_{n'} + (n - n')$ and $2(r - l) \ge t_{n'}$, we must increase $n$ by at least $\lceil\frac{1}{2}t_{n'}\rceil$.
\end{enumerate}
The third condition forces us to update catchers after $\lceil\frac{4e}{e - 1}t_{n'}\rceil$ increments of $n$. Indeed, we have $n - r = \lceil\frac{4e}{e - 1}t_{n'}\rceil + n' - r \ge \frac{4e}{e - 1}t_{n'} = 2(n' - l) > 2(r - l)$. Recall that for each $k\in [1..m]$, we have $n' - i_k = \Theta(t_{n'})$ and $j_k - i_k + 1 = \Theta(t_{n'})$. Hence, by Lemma~\ref{CatcherTime}, the catchers $\{c_k\}_{k=1}^m$ take $O(t_{n'})$ overall time. Thus the time required to maintain all catchers is amortized over the sequence of increments of $n$ and $r_n$.
\qed
\end{proof}

\begin{theorem}
For ordered alphabet, there exists an algorithm that online detects $e$-repetitions in $O(n\log\sigma)$ time and linear space, where $\sigma$ is the number of distinct letters in the input string.
\end{theorem}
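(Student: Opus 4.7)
The plan is to invoke Lemma~\ref{OrderedLemma}: the theorem reduces to providing an online computation of the sequence $\{t_i\}$ in $O(n\log\sigma)$ time and $O(n)$ space, since on top of such a procedure the lemma itself contributes only linear overhead. So the whole task is to deliver $t_i$ as soon as $text[i]$ has been read.

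To do this, I would run Ukkonen's online suffix tree construction on $text$, using a balanced binary search tree at each internal node to store its children; for an ordered alphabet this classical construction runs in $O(n\log\sigma)$ time and $O(n)$ space. After Ukkonen's algorithm finishes processing $text[1..i]$, its active point marks the longest suffix $s_i$ of $text[1..i]$ that still has at least two occurrences in $text[1..i]$: every strictly longer suffix of $text[1..i]$ was turned into an explicit leaf during the $i$-th phase precisely because it was unioccurrent, while $s_i$ itself is still a repeated substring. Therefore $t_i = |s_i| + 1$, and this value is read in $O(1)$ time from the character depth of the active point, which Ukkonen's algorithm already maintains explicitly. The boundary case where $text[i]$ is a fresh letter corresponds to $s_i$ being empty and thus $t_i = 1$, which is what the procedure returns.

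The main technical step in a rigorous writeup will be justifying the identification of the active point with the longest repeated suffix of $text[1..i]$. This follows from the standard Ukkonen invariants: during phase $i{+}1$ the algorithm starts from the previous active point and walks down the chain of suffix links, creating an explicit leaf for every currently implicit suffix whose extension by $text[i{+}1]$ is not yet in the tree, and stops at the first implicit suffix whose extension already occurs in $text[1..i]$; that location becomes the new active point. Hence after each phase the active point cleanly separates the unioccurrent suffixes of $text[1..i]$ from the repeated ones. Feeding this $O(\log\sigma)$-amortized-per-letter online access to $\{t_i\}$ into Lemma~\ref{OrderedLemma} then yields online $e$-repetition detection in $O(n\log\sigma)$ time and $O(n)$ space, as claimed.
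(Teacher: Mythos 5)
Your proof is correct and follows essentially the same route as the paper: both reduce the theorem to Lemma~\ref{OrderedLemma} by supplying online access to the sequence $\{t_i\}$ from an online suffix tree construction running in $O(n\log\sigma)$ time and linear space. The only difference is the choice of construction --- the paper cites Weiner's online algorithm (and its optimization by Breslauer and Italiano) and leaves the extraction of $t_i$ implicit, whereas you use Ukkonen's algorithm and correctly identify $t_i$ as one plus the string depth of the active point, i.e., one plus the length of the longest suffix of $text[1..i]$ that occurs in $text[1..i{-}1]$.
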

\begin{proof}
To compute the sequence $\{t_i\}$, we can use, for example, Weiner's online algorithm~\cite{Weiner} (or its slightly optimized version~\cite{BreslauerItaliano}), which works in $O(n\log\sigma)$ time and linear space. Thus, the theorem follows from Lemma~\ref{OrderedLemma}. \qed
\end{proof}

\begin{corollary*}
For constant alphabet, there exists an algorithm that online detects $e$-repetitions in linear time and space.
\end{corollary*}

\noindent\textbf{Acknowledgement.} The author would like to thank Arseny M. Shur for the help in the preparation of this paper and Gregory Kucherov for stimulating discussions.

\bibliographystyle{splncs03}
\bibliography{repetitionFree}

\end{document}